\documentclass[letterpaper, 10 pt, conference]{ieeeconf}  

\IEEEoverridecommandlockouts                              

\overrideIEEEmargins        
\usepackage{cite}
\usepackage{amsmath}
\usepackage{amssymb}
\usepackage{enumerate}
\usepackage{url}
\usepackage{cite}

\usepackage{amsthm}
\usepackage{txfonts}
\usepackage{ifpdf}
\ifpdf
   \usepackage[pdftex]{graphicx}
   \usepackage{epstopdf}
   \DeclareGraphicsRule{.eps}{pdf}{.pdf}{`epstopdf #1}
   \pdfcompresslevel=9
   \AppendGraphicsExtensions{.gif}
\else
   \usepackage[dvips]{graphicx}
\fi

\theoremstyle{plain}
\newtheorem{lem}{Lemma}
\newtheorem{corr}[lem]{Corollary}
\newtheorem{thm}{Theorem}

\theoremstyle{definition}
\newtheorem{defn}{Definition}

\theoremstyle{remark}

\title{Cooperative Content Offloading Through WiFi and Mobile Device-to-Device Networks}

\author{Guoqiang Mao$^{1,2}$ and Xiaofeng Tao$^{2}$
	\thanks{This research is supported by Chinese National Science Foundation project 61428102.}
	\thanks{$^{1}$ School of Computing and Communications, The University of Technology Sydney}%
	\thanks{$^{2}$ National Engineering Lab. for Mobile Network Security, Beijing University of Posts and Telecommunications}%
}

\begin{document}

\maketitle
\thispagestyle{empty}
\pagestyle{empty}

\begin{abstract}
This paper investigates the use of WiFi and mobile device-to-device networks, with vehicular ad hoc networks being a typical example, as a complementary means to offload and reduce the traffic load of cellular networks. A novel cooperative content dissemination strategy is proposed for a heterogeneous network consisting of different types of devices with different levels of mobility, ranging from static WiFi access points to mobile devices such as vehicles. The proposed strategy offloads a significant proportion of data traffic from cellular networks to WiFi or device-to-device networks. Detailed analysis is provided for the content dissemination process in heterogeneous networks adopting the strategy. On that basis, the optimal parameter settings for the content dissemination strategy are discussed. Simulation and numerical studies show that the proposed strategy effectively reduces data traffic for cellular networks while guaranteeing successful content delivery.
\end{abstract}

\section{Introduction}\label{section_introduction}
Recent years have seen an exponential annual increase in  mobile data traffic\cite{li-multiple-2013, andrews-seven-2013}. Content dissemination  and offloading, where a portion of data traffic is offloaded from cellular networks to high-capacity and low-cost complementary networks, including WiFi, vehicular and device-to-device networks, to ease the burden of cellular networks, becomes  an increasingly important and challenging task \cite{zhang-enabling-2013, lee-mobile-2013, Kannan08Robust, Ge15Energy, Mao09Road, Mao09Graph, Mao06Online}.

In addition to complementary networks, the mobility of users can also be exploited to assist the content dissemination. When users carry mobile devices physically while walking around university campus, shopping centres  or travelling by taxis, buses or private vehicles, the content in their mobile devices also move with them without consuming any bandwidth. Together with device-to-device communication technologies such as vehicle-to-vehicle communications, mobility of users offers an alternative way to transport delay-tolerant content efficiently and cost effectively~\cite{li-multiple-2013}.

A large proportion of the content delivered over mobile networks is delay-tolerant content, like videos, newspapers, weather reports and vehicular info-entertainment. For these types of content, content offloading reduces the traffic load of cellular networks (and boosts the capacity of cellular networks in the sense that more users can be served) and also provides a higher data rate for the users. The expense is that content offloading will incur higher delay compared with direct transmission using cellular networks.

This paper proposes a novel cooperative content dissemination and offloading strategy for a heterogeneous network consisting of different types of devices with different levels of mobility, ranging from static WiFi access points (APs) to mobile devices such as vehicles, to reduce the traffic load of cellular networks while guaranteeing the successful delivery of content. The strategy is particularly suited for the delivery of delay tolerant content.
More specifically, the following contributions are made in the paper: 
\begin{enumerate}
\item a cooperative content dissemination strategy is proposed, which exploits all three data dissemination methods, i.e. cellular networks, complementary networks and device mobility, to disseminate content; 
\item erasure codes, to be introduced in Section \ref{section_system_model}, are employed to further reduce the data traffic load;
\item analytical results are presented to stochastically characterise the content dissemination process, taking into account the heterogeneity in the devices, in terms of mobility and transmission capability. In particular, the reduction in the amount of data traffic in cellular networks is calculated;
\item based on the above results, optimal parameter settings of the content dissemination strategy are discussed, which minimise the data traffic load of the cellular network while guaranteeing the successful delivery of all content.
\end{enumerate}

The rest of this paper is organised as follows: Section~\ref{section_related_work} reviews related work. Section~\ref{section_system_model} introduces the system model, including the content dissemination strategy. 
Section~\ref{section_analysis} presents the analysis of the content dissemination process, whose optimal design is studied in Section \ref{section_optimization}. Section~\ref{section_simulation} validates the analysis using simulations. Finally Section~\ref{section_conclusion} concludes this paper and discusses possible future work.

\section{Related work}\label{section_related_work}

In this section, we give a brief review of the work most related to this paper.

Recent research \cite{lee-mobile-2013} has shown that WiFi networks have already carried and offloaded a large amount of mobile data traffic. When a device enters into a WiFi covered area, it can switch its data traffic from cellular networks to WiFi networks \cite{bulut-wifi-2013} to reduce the traffic load of cellular networks. A major issue in WiFi offloading is the optimum deployment of WiFi APs. Bulut et al. \cite{bulut-wifi-2013} compared different methods of deploying WiFi APs for efficient offloading of mobile data traffic. They also proposed a greedy approach that can achieve a high offloading efficiency. 

In addition to using WiFi APs, recent technology has also allowed mobile devices to become a virtual WiFi access points (a.k.a. WiFi Tethering), so that devices can communicate with one another via ad hoc connections without relying on infrastructure. In this way, devices can \emph{cooperate} with one another to disseminate content. 

Using a typical cooperative content dissemination strategy \cite{aijaz-a-2013}, the service providers first deliver the content to only a small group of users, then these users can further disseminate the content to other subscribed users when their mobile devices are in the proximity and can communicate using WiFi tethering or Bluetooth technology. It is obvious that such opportunistic content dissemination cannot guarantee the delivery of content. This paper proposes a mechanism that can provide guarantee on the delivery of content.

Furthermore, some work in cooperative mobile data offloading (e.g. \cite{li-multiple-2013,bo-mobile-2012}) investigated the design and selection of helpers, viz. some special mobile devices selected to help the content provider to deliver messages to other mobile devices using ad hoc connections. 
%

As vehicular networks form an important class of mobile networks, there is significant research on content dissemination in vehicular networks \cite{zhang-enabling-2013}. 
Wireless access through vehicle-to-roadside communications can be used in public transport vehicles for streaming applications, e.g., videos and interactive advertisements. As pointed out in \cite{niyato-a-2010}, it is challenging to develop an efficient wireless access scheme that minimises the cost of wireless connectivity for downloading data. There are also some empirical studies on content dissemination methods for vehicular networks \cite{panichpapiboon-a-2012}.

\section{System model and content dissemination strategy}\label{section_system_model}
\subsection{Network model}\label{section_network_model}
We consider a network  of $N$ nodes. These $N$ nodes can be classified into $H$ types according to their mobility, e.g. static WiFi APs, mobile devices carried by pedestrians and vehicles, and social characteristics, e.g. students in the same class. Let $N_h$ be the number of nodes of the $h^{th}$ type:  $\sum_{h=1}^H N_h=N$.

Suppose that at some initial time instant $t=0$, the $N$ nodes are independently, randomly and uniformly distributed on a torus $(0,L]^2$ \cite{franceschetti-random-2007}. 
It follows that the nodes' density in this network is $\lambda=N/L^2$. 
Then these nodes start to move independently according to  some mobility models\cite{camp-a-2002}.  We assume that the nodes' mobility is such that the spatial distribution of nodes is stationary and ergodic with stationary uniform distribution on the torus. As shown in  \cite{nain-properties-2005}, a number of mobility models have this property.

The use of a toroidal rather than planar region as a tool in analysing network properties is well known. The use of torus allows nodes located near the boundary to have the same number of connections probabilistically as a node located near the centre. Therefore the consideration of a torus implies that there is no need to consider special cases occurring near the boundary of the region. This often simplifies the analysis without obscuring the relationship between main performance-impacting parameters.

\subsection{Wireless communication model}
We consider two types of wireless networks: the cellular network and the complementary network. It is assumed that every node is directly connected to at least one cellular base station (BS) at any time. 
In the complementary network, devices communicate with one another via ad hoc connections using device-to-device communication technologies such as Bluetooth, WiFi or DSRC \cite{DSRC-250}. These ad hoc connections are  usually of high capacity, and can be exploited by mobile devices to cooperatively share content of common interest.

Due to the limited communication range, the ad hoc connection between two nodes only emerges opportunistically when they move close to each other. Considering a commonly used model, called the \emph{unit disk connection model} (UDM), two nodes are directly connected iff their Euclidean distance is not larger than the radio range $r_0$. 
Adopting this commonly used connection model, we say that two nodes \emph{meet} each other when they move into the radio range of each other. 
Consequently define the \emph{inter-meeting time} of two nodes as the time interval between two consecutive meetings of two nodes (a more rigorous definition is given later in Section \ref{section_inter_meeting}).

There have been a number of studies on the distribution of the inter-meeting time. 
In particular, Cai and Eun \cite{cai-crossing-2009} analytically proved that for two nodes moving in a finite area (with reflective or wrapping boundary) under random waypoint or random walk mobility models, their inter-meeting time has an exponential distribution, whereas the inter-meeting time of nodes moving in an infinite area follows a power-law distribution. 

This paper also considers that the inter-meeting time follows an exponential distribution. The analysis presented in this paper however is directly applicable to other inter-meeting time distributions as shown in Section \ref{section_inter_meeting}.

\subsection{Cooperative content dissemination strategy}\label{section_strategy}
This subsection describes the cooperative content dissemination and offloading strategy.

Consider that a content provider has $M$ messages to deliver to the $N$ nodes. The content dissemination process has three phases: initial phase, sharing phase and complement phase.

\begin{enumerate}[{Phase} 1]
\item Initially at time $t=0$, the cellular BSs transmit  $\beta$ packets to $\beta$ different nodes via cellular networks. Note that the content in these packets depends on the coding scheme to be described in Section \ref{section_coding_scheme}.

\item Then the network enters into the \emph{sharing phase}, where the nodes broadcast their received packets via ad hoc connections using the Susceptible-Infected-Recovered (SIR) epidemic scheme, to be described in Section \ref{section_SIR}.

\item At time $\mathcal{T}_{end}$ (called deadline), the sharing phase stops and then the network enters into the \emph{complement phase}, where every node requests  the remaining packets required to decode all $M$ messages from BSs, where the \emph{remaining packets} required to decode the messages are determined by the coding scheme.
\end{enumerate}

The main objective of the content dissemination strategy is to minimise the total number of packets requiring to be transmitted through cellular networks, which include the packets transmitted in the initial phase and in the complement phase, while guaranteeing the successful delivery of the content.

\subsubsection{Coding scheme}\label{section_coding_scheme}

We consider the use of a simple erasure coding scheme. Specifically, the $M$ messages are encoded into $\beta$ \emph{coded packets}  \cite{sathiamoorthy-distributed-2013}. We assume that the Galois field \cite{sathiamoorthy-distributed-2013,li-coding-2013} used in the encoding process is large enough so that the content provider can generate $\beta$ linear-independently coded packets.  It follows that each node needs to acquire $M$ \emph{distinct} coded packets to reconstruct all the $M$ messages.

Consequently, if a node receives $B<M$ distinct coded packets by the end of the sharing phase, then it needs to request $M-B$ coded packets in the complement phase via cellular networks in order to decode all $M$ messages.

For benchmarking, we also consider the case that no coding technique is employed. In this case, the BSs transmit $\hat{\beta}_m \in\{0,1,\dots,\beta\}$ copes of the $m^{th}$ message in the first phase for $m=\{1,2,\dots, M\}$ and $\sum_{m=1}^M \hat{\beta}_m=\beta$. Then each node needs to receive at least one copy of each message.

\subsubsection{Epidemic sharing scheme}\label{section_SIR}
In the sharing phase, the $\beta$ packets are shared among $N$ nodes using a Susceptible-Infected-Recovered (SIR) epidemic sharing scheme. 

Without loss of generality, consider the epidemic sharing of an arbitrary packet, say packet $j$. Using a classic SIR broadcast scheme, a node in the network can be in any of the following three states: the node that has never received the packet $j$ is in the state of susceptible ($S_j$). A susceptible node goes into the state of infected and infectious ($I_j$) immediately after receiving the packet $j$. The node in state $I_j$ keeps transmitting the packet $j$ to every node it meets for a certain time period, which is referred to as the \emph{active period}. Denote by $\tau_h$ the length of the active period of a type-h node. Note that $\tau_h$ is a pre-determined value, which is the same for all nodes of the same type. After the active period, the node recovers and enters into state $R_j$. A recovered node stops transmitting and receiving the packet $j$. The nodes that have received the packet $j$ are referred to as the \emph{informed nodes of the packet $j$}.

Note that the value of $\tau_h$ may be different for different types of nodes. For example, a pre-installed WiFi AP can have a significantly larger value of $\tau_h$ compared with other mobile devices that are powered by battery. Further, the value of $\tau_h$ for a mobile device can be tuned by introducing some incentives (e.g. a lower subscription fee or some rewards) to the mobile users \cite{bo-mobile-2012}, so that the mobile users are willing to share more packets with other users.

When the length of the sharing phase $\mathcal{T}_{end}$ is long, the epidemic sharing process stops naturally (i.e. reaches the \emph{steady state}) when, for all the packets, there is no infectious node. We are particularly interested in the case where the $\mathcal{T}_{end}$ is long because it is suitable for delay-tolerant content disseminations and minimises the traffic load of the cellular network by fully utilising the complementary networks to share content.

Note that this paper considers a large network where $N\gg M$. Furthermore, in the next section, when we consider an \emph{asymptotic} network with $N\rightarrow\infty$, we increase the network area $L\rightarrow\infty$ while keeping the density of every type of nodes unchanged. That is, a well-known extended network model is considered. The analytical results obtained are therefore applicable for a large network only.

\section{Analysis of the content dissemination process}\label{section_analysis}
The main challenge in the analysis of the content dissemination process is the characterisation of the packet propagation process in the sharing phase. This section first analyses the propagation process of a single packet, say the packet $j$, and then generalises to multiple packets.

\subsection{Characterising the ad hoc connections}\label{section_inter_meeting}
Denote by $T_{h,k}$ the inter-meeting time between a randomly-chosen type-h node and a randomly-chosen type-k node, for $h,k\in\{1,2,\dots,H\}$. Assume that $T_{h,k}$ follows an exponential distribution with mean $\lambda_{h,k}$.  The pdf of the inter-meeting time is then given by $\Pr(T_{h,k}=t)=\lambda_{h,k}\exp(-\lambda_{h,k} t)$. It follows that the probability that the type-h node meets the type-k node during the active period $\tau_h$ of the type-h node is
\begin{eqnarray}\label{eq_contact_prob}
\gamma_{h,k}=\int_0^{\tau_h} \Pr(T_{h,k}=t) dt
= 1-\exp(-\lambda_{h,k} \tau_h).
\end{eqnarray}

Note that given other probability distribution of inter-meeting time, one can use a similar method to calculate $\gamma_{h,k}$. Thus our analysis does not critically depend on the assumption of the exponential inter-meeting time distribution.

\subsection{Extinct probability}
In this subsection, we construct a multi-type branching process \cite{athreya-branching-1972,harris-the-2002} to study the number of informed nodes of a  typical packet, say packet $j$, where the $0^{th}$ generation of the branching process includes the nodes that receive the packet $j$ at time $0$ (i.e. in the first phase). Further, the number of type-k children generated by a type-h node is denoted by a random variable $\hat  Q_{h,k}$. 
Because nodes move independently of one another, it is evident that $\hat Q_{h,k}$ follows a Binomial distribution $\mathrm{Bin}(N_k,\gamma_{h,k})$, where $\gamma_{h,k}$ is given in Eq.\ \ref{eq_contact_prob}. Denote by $\mathcal{X}_\alpha^h$ the number of type-h nodes in the $\alpha^{th}$ generation. The branching process modelling the number of informed nodes for a typical packet becomes \emph{extinct} if there exists an integer value of $\alpha$ such that $\sum_{h=1}^H \mathcal{X}_\alpha^h =0$.

The following result is required in the later analysis:
\begin{lem}[Threshold phenomenon]
Define matrix $\hat\Xi\triangleq\{\mathbb{E}[\hat Q_{h,k}]=N_k\gamma_{h,k}; h,k=1,2,...,H\}$. Let $R_q$ be the largest eigenvalue of $\hat\Xi$. Then the branching process will become extinct with probability 1 if and only if $R_q\leq 1$.
\end{lem}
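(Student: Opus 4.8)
The plan is to translate the extinction question into a statement about the fixed points of the multi-type offspring generating-function map and to control the local behaviour of that map near the all-ones point through the Perron--Frobenius spectral theory of $\hat\Xi$. First I would introduce, for each type $h$, the offspring probability generating function $f_h(\mathbf{s})=\mathbb{E}\bigl[\prod_{k=1}^H s_k^{\hat Q_{h,k}}\bigr]$ on the cube $[0,1]^H$. Since nodes move independently, $\hat Q_{h,1},\dots,\hat Q_{h,H}$ are independent and, using $\hat Q_{h,k}\sim\mathrm{Bin}(N_k,\gamma_{h,k})$, the generating function factorises as $f_h(\mathbf{s})=\prod_{k=1}^H\bigl(1-\gamma_{h,k}(1-s_k)\bigr)^{N_k}$. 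Differentiating at $\mathbf{s}=\mathbf{1}$ gives $\partial f_h/\partial s_k|_{\mathbf{1}}=N_k\gamma_{h,k}$, so the Jacobian of the vector map $\mathbf{f}=(f_1,\dots,f_H)$ at $\mathbf{1}$ is precisely $\hat\Xi$. This is the bridge between the combinatorics of the branching process and the spectral quantity $R_q$.

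Next I would record the standard structure of $\mathbf{f}$: it maps $[0,1]^H$ into itself, is coordinatewise nondecreasing and convex, and satisfies $\mathbf{f}(\mathbf{1})=\mathbf{1}$. Writing $\mathcal{X}_\alpha=(\mathcal{X}_\alpha^1,\dots,\mathcal{X}_\alpha^H)$ as a row vector, the extinction-probability vector $\mathbf{q}$ (with $q_h$ the extinction probability started from a single type-$h$ node) is the smallest fixed point of $\mathbf{f}$ in $[0,1]^H$, obtained as the increasing limit of the iterates $\mathbf{f}^{(\alpha)}(\mathbf{0})$; extinction with probability $1$ from every type is then equivalent to $\mathbf{q}=\mathbf{1}$. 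Because each $\gamma_{h,k}=1-\exp(-\lambda_{h,k}\tau_h)$ is strictly positive, every entry of $\hat\Xi$ is positive, so $\hat\Xi$ is a primitive nonnegative matrix and the Perron--Frobenius theorem supplies a simple eigenvalue $R_q$ equal to its spectral radius, with a strictly positive left eigenvector $\mathbf{v}$ satisfying $\mathbf{v}^{\top}\hat\Xi=R_q\,\mathbf{v}^{\top}$ and a strictly positive right eigenvector $\mathbf{u}$ with $\hat\Xi\mathbf{u}=R_q\mathbf{u}$.

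The two directions of the dichotomy then follow by comparing $R_q$ with $1$. For the subcritical case $R_q<1$, the first-moment identity $\mathbb{E}[\mathcal{X}_\alpha]=\mathcal{X}_0\hat\Xi^{\alpha}$ together with $\|\hat\Xi^{\alpha}\|\sim R_q^{\alpha}\to0$ and Markov's inequality gives $\Pr(\sum_h\mathcal{X}_\alpha^h\geq1)\leq\mathbb{E}[\sum_h\mathcal{X}_\alpha^h]\to0$, hence $\mathbf{q}=\mathbf{1}$. For the supercritical case $R_q>1$, I would produce a fixed point strictly below $\mathbf{1}$: a first-order expansion of $\mathbf{f}$ about $\mathbf{1}$ gives $\mathbf{f}(\mathbf{1}-\epsilon\mathbf{u})\approx\mathbf{1}-\epsilon R_q\mathbf{u}$, which for small $\epsilon$ is coordinatewise strictly below $\mathbf{1}-\epsilon\mathbf{u}$ because $R_q>1$ and $\mathbf{u}>\mathbf{0}$; the monotone iterates started there therefore decrease to a fixed point strictly below $\mathbf{1}$, so $\mathbf{q}<\mathbf{1}$ and extinction is not almost sure.

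The main obstacle is the critical case $R_q=1$, where the expected population neither grows nor decays and the crude first-moment bound above is useless. Here I would exploit convexity directly: for any fixed point $\mathbf{q}$, setting $\mathbf{w}\triangleq\mathbf{1}-\mathbf{q}\geq\mathbf{0}$ and using that each $f_h$ lies above its tangent hyperplane at $\mathbf{1}$ (whose gradient is the $h$-th row of $\hat\Xi$) yields the coordinatewise inequality $\mathbf{w}\leq\hat\Xi\mathbf{w}$; pairing with $\mathbf{v}>\mathbf{0}$ gives $\mathbf{v}^{\top}\mathbf{w}\leq R_q\,\mathbf{v}^{\top}\mathbf{w}$. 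Provided the $f_h$ are strictly convex on the relevant segment (which holds here since the node counts satisfy $N_k\geq2$ and $\gamma_{h,k}\in(0,1)$), a nonzero $\mathbf{w}$ makes this inequality strict and forces $R_q>1$; hence at criticality no subunit fixed point can exist and $\mathbf{q}=\mathbf{1}$. The same tangent inequality incidentally re-derives the subcritical direction, so the delicate point throughout is securing the strict convexity needed to settle $R_q=1$, which is where I would follow the classical multi-type branching-process theory \cite{athreya-branching-1972,harris-the-2002}.
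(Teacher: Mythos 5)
Your proof is correct, but note how it relates to the paper: the paper offers no proof at all for this lemma --- it simply invokes Theorem 2 in Chapter V of \cite{athreya-branching-1972} and stops. What you have written is a self-contained reconstruction of precisely that classical result, by the standard route: offspring generating functions $f_h$, the extinction vector as the minimal fixed point of $\mathbf{f}$ on $[0,1]^H$, the Jacobian of $\mathbf{f}$ at $\mathbf{1}$ identified with the mean matrix $\hat\Xi$, and the Perron--Frobenius trichotomy (first-moment/Markov bound for $R_q<1$; the sub-fixed point $\mathbf{1}-\epsilon\mathbf{u}$ for $R_q>1$; convexity paired with the strictly positive left eigenvector for $R_q=1$). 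The added value of your version is that it surfaces the hypotheses under which the paper's citation is actually legitimate, which the paper glosses over: (i) positive regularity of the mean matrix, and (ii) non-singularity of the offspring law, which is exactly your strict-convexity caveat and is what rules out the degenerate case in which the ``if and only if'' fails. Two minor caveats on your own justification of (i): strict positivity of every $\gamma_{h,k}$ can fail in the paper's own scenarios --- two static nodes never meet, so $\gamma_{h,h}=0$ for a static type --- but your argument only needs $\hat\Xi$ irreducible (to obtain strictly positive left and right Perron eigenvectors), which still holds in such cases; and your notation $\mathbf{w}\triangleq\mathbf{1}-\mathbf{q}$ collides with the paper's $w_h$, which in Theorem 1 denotes the extinction probability itself rather than its complement, so one of the two should be renamed to avoid confusion. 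Neither point affects correctness.
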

This result can be readily obtained by applying Theorem 2 in \cite[Chapter V]{athreya-branching-1972}. Hence the proof is omitted.

When $R_q > 1$, there is a positive probability that the branching process does not become extinct, i.e. the packet can be disseminated to a significant fraction of nodes as the network size becomes asymptotically large. 

Note that in a heterogeneous network, the type of source nodes that a branching process is rooted at can have a significant impact on the probability that the branching process becomes extinct. Denote by $w_h$ the \emph{extinct probability of type-h source}, which is defined as the probability that a branching process rooted at a type-h node becomes extinct. The following theorem characterises $w_h$.
\begin{thm}[Extinct probability]\label{thm_extinct}
Consider an asymptotic network with $L\rightarrow\infty$ while keeping nodes' density unchanged and the node's communication range varying with $L$ such that $N_k\gamma_{h,k}$ is a finite constant \footnote{This condition is required to avoid triviality in the analysis}. The extinct probabilities $w_h$ for $h=1,2,\dots,H$ are the solutions to the following system of equations:
\begin{equation}\label{eq_extinct_system}
w_h=  \exp\left( \sum_{k=1}^H N_k\gamma_{h,k}(w_k-1) \right), ~\mathrm{ for }~ h=1,2,\dots,H.
\end{equation}
\end{thm}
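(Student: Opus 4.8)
The plan is to use the standard theory of multi-type branching processes through probability generating functions (PGFs). First I would define, for each source type $h$, the multivariate PGF of the offspring vector, $f_h(s_1,\dots,s_H)=\mathbb{E}[\prod_{k=1}^H s_k^{\hat Q_{h,k}}]$. Because nodes move independently, the offspring counts $\hat Q_{h,1},\dots,\hat Q_{h,H}$ generated by a single type-$h$ node are mutually independent, so the PGF factorises as $f_h=\prod_{k=1}^H \mathbb{E}[s_k^{\hat Q_{h,k}}]$. Using $\hat Q_{h,k}\sim\mathrm{Bin}(N_k,\gamma_{h,k})$, each factor is the Binomial PGF $(1-\gamma_{h,k}(1-s_k))^{N_k}$, giving $f_h(s_1,\dots,s_H)=\prod_{k=1}^H (1-\gamma_{h,k}(1-s_k))^{N_k}$.

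Next I would invoke the classical result \cite{athreya-branching-1972} that the vector of extinction probabilities $(w_1,\dots,w_H)$ is the smallest fixed point in $[0,1]^H$ of the map $\mathbf{s}\mapsto(f_1(\mathbf{s}),\dots,f_H(\mathbf{s}))$, i.e.\ $w_h=f_h(w_1,\dots,w_H)$ for each $h$. For the finite network this yields $w_h=\prod_{k=1}^H (1-\gamma_{h,k}(1-w_k))^{N_k}$. I would then pass to the asymptotic limit: writing $c_{h,k}\triangleq N_k\gamma_{h,k}$, held fixed and finite by hypothesis while $N_k\to\infty$ (so $\gamma_{h,k}=c_{h,k}/N_k\to 0$), each factor obeys $(1-c_{h,k}(1-s_k)/N_k)^{N_k}\to\exp(N_k\gamma_{h,k}(s_k-1))$. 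Hence the offspring PGF converges to that of a multi-type \emph{Poisson} branching process, $f_h^\infty(\mathbf{s})=\exp(\sum_{k=1}^H N_k\gamma_{h,k}(s_k-1))$, whose fixed-point equation is precisely Eq.\ \ref{eq_extinct_system}. Equivalently, one notes directly that $\mathrm{Bin}(N_k,c_{h,k}/N_k)$ converges in distribution to a Poisson with mean $c_{h,k}$, so the limiting process has Poisson offspring.

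The main obstacle I anticipate is justifying the interchange of the $N_k\to\infty$ limit with the fixed-point operation: one must show that the extinction probabilities of the finite-$N$ processes genuinely converge to the fixed point of the limiting Poisson PGF, rather than merely that the PGFs converge pointwise. I would handle this using the monotonicity and continuity of the generating-function maps together with the uniform convergence of $f_h$ to $f_h^\infty$ on the compact set $[0,1]^H$. Since each $f_h$ is coordinatewise nondecreasing and the least fixed point is obtained as the monotone limit of the iterates started from $\mathbf{0}$, uniform convergence of the maps transfers to convergence of their least fixed points, closing the argument.

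The hypothesis that $N_k\gamma_{h,k}$ remains finite is exactly what keeps the mean matrix $\hat\Xi$ (and hence $R_q$ from the preceding lemma) nondegenerate, ruling out the trivial regimes flagged in the footnote: were $N_k\gamma_{h,k}\to\infty$, the offspring means would blow up and the fixed point would collapse to the all-zero solution, while the stated scaling produces a genuine supercritical/subcritical dichotomy consistent with the threshold behaviour already established.
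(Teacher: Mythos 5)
Your proposal is correct and takes essentially the same route as the paper's proof: both factor the multi-type offspring PGF using independence of the $\hat Q_{h,k}$, invoke the Athreya--Ney fixed-point characterisation of the extinction probability vector, and use the binomial-to-Poisson limit under the scaling in which $N_k\gamma_{h,k}$ is held constant. The only difference is one of ordering --- the paper replaces the offspring law by its Poisson limit first and then applies the fixed-point theorem, whereas you write the finite-$N$ binomial fixed point and then pass to the limit, additionally justifying the interchange of limit and fixed-point operation, a point the paper's proof leaves implicit.
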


\begin{proof}
Firstly, note that in the construction of the branching process, we consider that every node that a type-h node meets is a susceptible node. As the packet propagates, the probability that a type-h node meets an informed node increases. This may reduce the expected number of newly informed nodes generated by an infectious node. However, there is no need to consider the impact of this effect in the analysis of the extinct probability because when analysing the extinct probability, we are only interested in the case that the fraction of recipients is vanishingly small (i.e. becomes extinct) as $L\rightarrow\infty$ and $N\rightarrow\infty$. Accordingly, the probability that a type-h node meets an informed node is vanishingly small and hence negligible.

Because $N_k\gamma_{h,k}$ is a finite constant as the network size increases, the distribution of $\hat Q_{h,k}$, i.e. a Binomial distribution $\mathrm{Bin}(N_k,\gamma_{h,k})$, approaches a Poisson distribution with an expected value $N_k\gamma_{h,k}$ \cite{pollard-poisson-1997}. The difference between the $\hat Q_{h,k}$ and its Poisson distribution counterpart (denoted by $Q_{h,k}$) diminishes as $N_k\rightarrow\infty$ and $\gamma_{h,k}\rightarrow 0$, where the convergence rate is given in \cite{pollard-poisson-1997}.

Denote by $G_{h,k}(s)$ the probability generating function of $Q_{h,k}$:
\begin{equation}
G_{h,k}(s)=\mathbb{E}[s^{Q_{h,k}}]
=\exp\left( N_k\gamma_{h,k}(s-1) \right).
\end{equation}

Further, define the multi-variate probability generating function $\hat{G}_h(\mathbf{s})\triangleq \mathbb{E}[s_1^{Q_{h1}}s_2^{Q_{h2}} \dots s_H^{Q_{hH}}]$, where $\mathbf{s}\triangleq \{s_1, s_2, \dots, s_H\}$ is a row vector. It can be shown that
\begin{equation}
\hat{G}_h(\mathbf{s})= G_{h,1}(s_1)G_{h,2}(s_2)\dots G_{h,H}(s_H).
\end{equation}

Denote the extinct probabilities by a row vector $\mathbf{w}\triangleq [w_1, w_2, \dots, w_H]$. Then according to Theorem 2 in \cite[Chapter V]{athreya-branching-1972}, the extinct probabilities satisfy $\mathbf{w}=\mathbf{\hat G}(\mathbf{w})$, where $\mathbf{\hat G}(\mathbf{s})$ is a row vector $\left[ \hat{G}_1(\mathbf{s}),\hat{G}_2(\mathbf{s}),\dots , \hat{G}_H(\mathbf{s})\right]$.
The conclusion follows that the extinct probabilities $w_h$ for $h=\{1,2,\dots,H\}$ are the solutions to the following system of equations:
\begin{eqnarray}
w_1&=& \prod_{k=1}^H \exp\left( N_k\gamma_{1,k}(w_k-1) \right), \\
&\dots&  \nonumber\\
w_H&=& \prod_{k=1}^H \exp\left( N_k\gamma_{H,k}(w_k-1) \right) ,\nonumber
\end{eqnarray}
where $N_k$ is the number of type-k nodes in the network and $\gamma_{h,k}$ is given by Eq. \ref{eq_contact_prob}.
\end{proof}

Using Theorem \ref{thm_extinct}, we can further obtain the extinct probability for several special-case of networks.

\begin{corr}\label{corr_extinct_single_type}[Extinct probability for homogeneous networks]
Consider the special case of a network with only $H=1$ type of nodes, the extinct probability is $w_1=-\frac{\mathcal{W}(-N\gamma_{1,1}\exp(-N\gamma_{1,1}))}{N\gamma_{1,1}}$, where $\mathcal{W}(.)$ is the Lambert-W function.
\end{corr}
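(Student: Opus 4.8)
The plan is to specialise Theorem~\ref{thm_extinct} to the case $H=1$ and then to recognise the resulting transcendental equation as one whose solution is expressed through the Lambert-W function. Setting $H=1$ in Eq.~\ref{eq_extinct_system}, and noting that $N_1=N$ since $\sum_{h=1}^H N_h=N$, the system collapses to the single fixed-point equation
\begin{equation}
w_1=\exp\left(N\gamma_{1,1}(w_1-1)\right).
\end{equation}
Writing $a\triangleq N\gamma_{1,1}$ for brevity, the first step is to rearrange this into the canonical form $ue^{u}=z$ that defines the Lambert-W function. Multiplying out the exponent gives $w_1=e^{-a}e^{aw_1}$, and substituting $u\triangleq -aw_1$ (so that $w_1=-u/a$) transforms the equation into $ue^{u}=-ae^{-a}$. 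By the defining relation $\mathcal{W}(z)e^{\mathcal{W}(z)}=z$, this yields $u=\mathcal{W}(-ae^{-a})$, and hence $w_1=-\mathcal{W}(-a\exp(-a))/a$, which is exactly the claimed expression once $a=N\gamma_{1,1}$ is restored.

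The step that requires care, and where I expect the only genuine subtlety to lie, is the selection of the correct branch of the Lambert-W function. Because $a>0$, the argument $-ae^{-a}$ lies in $[-1/e,0)$, the common domain of the two real branches $\mathcal{W}_0$ and $\mathcal{W}_{-1}$, so the equation $ue^{u}=-ae^{-a}$ generically has two real roots. One of them is always $u=-a$, equivalently $w_1=1$, which is the trivial fixed point corresponding to guaranteed extinction. I would argue that the extinct probability we seek is the \emph{smallest} root in $[0,1]$, consistent with the standard branching-process interpretation and with the threshold phenomenon lemma (for $H=1$ the mean matrix $\hat\Xi$ reduces to the scalar $a$, so $R_q=a$).

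To complete the branch identification I would examine the two regimes separately. When $a\le 1$ the root $u=-a$ lies on the principal branch and is the only admissible solution, giving $w_1=1$; when $a>1$ we have $-a<-1$, so $u=-a$ belongs to $\mathcal{W}_{-1}$, while the nontrivial extinct probability is supplied by the principal branch via $u=\mathcal{W}_0(-ae^{-a})\in(-1,0)$, yielding $w_1\in(0,1)$. Thus the $\mathcal{W}$ appearing in the statement should be read as the principal branch $\mathcal{W}_0$, which returns the genuine extinct probability in the supercritical regime $N\gamma_{1,1}>1$ and collapses to $w_1=1$ exactly when $N\gamma_{1,1}\le 1$, in agreement with the threshold result.
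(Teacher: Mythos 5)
Your proof is correct, and it is precisely the ``straightforward'' derivation the paper omits: specialise Eq.~\ref{eq_extinct_system} to $H=1$ to get $w_1=\exp\left(N\gamma_{1,1}(w_1-1)\right)$, then rearrange into the canonical form $ue^{u}=z$ and invert via the Lambert-W function. Your branch analysis is a genuine sharpening of the paper's statement, which leaves $\mathcal{W}(\cdot)$ unspecified: reading it as the principal branch $\mathcal{W}_0$ is exactly what makes the formula return the smallest fixed point in $[0,1]$, so that it gives the nontrivial extinct probability in the supercritical regime $N\gamma_{1,1}>1$ and collapses to $w_1=1$ when $N\gamma_{1,1}\le 1$, consistent with the threshold lemma.
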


\begin{corr}\label{corr_extinct_multiple}[Extinct probability with multiple source nodes]
Suppose that a packet is initially broadcast from $\beta=\sum_{h=1}^H \beta_h$ source nodes at the beginning of phase 2, where $\beta_h$ is the number of type-h source nodes. Then the extinct probability for this packet is
$
\prod_{h=1}^H w_h^{\beta_h}
$.
\end{corr}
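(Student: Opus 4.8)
The plan is to exploit the independence already built into the multi-type branching process. Recall that $w_h$ was defined as the probability that a branching process rooted at a single type-$h$ node becomes extinct, and Theorem \ref{thm_extinct} characterises these quantities as the solution of the fixed-point system (\ref{eq_extinct_system}). When the packet is seeded at $\beta$ source nodes simultaneously, I would view each source as the root of its own branching process, and the aggregate process as the superposition of these $\beta$ trees.

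The central observation is that the entire dissemination becomes extinct if and only if every one of the $\beta$ individual trees becomes extinct: the aggregate population dies out precisely when each rooted sub-population dies out. First I would argue that the $\beta$ sub-processes are mutually independent. This follows from the same reasoning used in the proof of Theorem \ref{thm_extinct}: because the extinction analysis concerns the regime in which the informed fraction is vanishingly small as $L\rightarrow\infty$, the probability that two distinct trees ever attempt to inform the same node (a ``collision'') is negligible; and since nodes move independently, the offspring counts $\hat Q_{h,k}$ of distinct infectious nodes are independent. Hence the $\beta$ trees evolve as independent copies of the corresponding single-source branching processes.

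Given this independence, the joint extinction probability factorises into the product of the per-tree extinction probabilities. A tree rooted at a type-$h$ node becomes extinct with probability $w_h$; grouping the $\beta$ sources by type, there are $\beta_h$ independent type-$h$ trees, each contributing a factor $w_h$, so that
\begin{equation}
\Pr(\text{all trees extinct}) = \prod_{h=1}^H w_h^{\beta_h}.
\end{equation}

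I expect the only delicate point to be the justification of independence, namely that the separate trees do not interfere through shared recipients. This poses no problem in the asymptotic extinction analysis for exactly the reason invoked in Theorem \ref{thm_extinct} --- the overlap probability vanishes in the limit --- but it is the step that warrants care, since for a finite network the clean factorisation $\prod_{h=1}^H w_h^{\beta_h}$ would hold only approximately.
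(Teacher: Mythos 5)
Your proof is correct and is exactly the standard argument the paper alludes to when it declares this corollary's proof ``straightforward'' and omits it: by the branching property, the $\beta$ subtrees rooted at the individual source nodes evolve as (asymptotically) independent copies of single-source branching processes, so the packet dies out iff every subtree dies out, giving the product $\prod_{h=1}^H w_h^{\beta_h}$. Your handling of the one delicate point --- that distinct trees do not interfere through shared recipients, because the collision probability vanishes in the same regime invoked in the proof of Theorem~\ref{thm_extinct} --- is precisely the approximation the paper already relies on there, so nothing further is needed.
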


The proof of the above two corollaries is straightforward and hence omitted.

Note that when the branching process does \emph{not} become extinct, the packet is disseminated to a significant number of nodes and we say that \emph{the packet spreads out}. The next sub-section quantifies the number of recipients of a packet when it spreads out.

\subsection{Expected fraction of recipients}

Denote by $\hat{z}_h$ the expected fraction of type-h nodes which receive the packet $j$ in the steady state, where the packet propagation starts from a randomly-chosen source node \emph{and} the packet spreads out. 
Now we further investigate $\hat{z}_h$.

\begin{thm}[Fraction of recipients when a packet spreads out]\label{thm_size}
In a large network with $N\rightarrow\infty$, suppose that the packet $j$ is broadcast from a randomly chosen source node, no matter which type the source node belongs to. Given that the packet is spread out, the expected fractions of recipients $\hat{z}_h$ for $h=1,2,\dots,H$ are the solutions to the following system of equations
\begin{equation}\label{eq_size_system}
1-\hat{z}_h = \exp\left( -\sum_{k=1}^H N_k\gamma_{k,h}\hat{z}_k \right), ~\mathrm{ for }~ h=1,2,\dots,H.
\end{equation}
\end{thm}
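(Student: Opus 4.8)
The plan is to obtain \eqref{eq_size_system} as a self-consistency condition on the steady-state infected set. I would fix a tagged type-$h$ node $v$ and compute $q_h\triangleq 1-\hat z_h$, the probability that $v$ never receives packet $j$. Reasoning \emph{backward}, $v$ can be infected only by a node that meets $v$ during that node's own active period, and a given type-$k$ node does so with probability $\gamma_{k,h}$ (the indices are now swapped relative to the forward process of Theorem~\ref{thm_extinct}, since what matters is who can infect $v$, not whom $v$ infects). There are $N_k\hat z_k$ infected type-$k$ nodes in the steady state, and, treating them as acting independently on $v$ (the crux, discussed below), each fails to transmit to $v$ with probability $1-\gamma_{k,h}$, so the escape probability factorises as $1-\hat z_h=\prod_{k=1}^H(1-\gamma_{k,h})^{N_k\hat z_k}$. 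Because $\gamma_{k,h}\to 0$ while $N_k\gamma_{k,h}$ stays finite, $\ln(1-\gamma_{k,h})=-\gamma_{k,h}+o(\gamma_{k,h})$, so the product collapses to $\exp(-\sum_{k}N_k\gamma_{k,h}\hat z_k)$, which is exactly \eqref{eq_size_system}.

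Equivalently, and this is the cleanest route given Theorem~\ref{thm_extinct}, I would run a \emph{backward} multi-type branching (exploration) process from $v$ whose type-$k$ offspring count is the number of potential infectors, distributed as $\mathrm{Bin}(N_k,\gamma_{k,h})\to\mathrm{Poisson}(N_k\gamma_{k,h})$. Its extinction probabilities satisfy $q_h=\prod_k\exp(N_k\gamma_{k,h}(q_k-1))$ verbatim by the argument proving Theorem~\ref{thm_extinct}, and the substitution $\hat z_h=1-q_h$ reproduces \eqref{eq_size_system}. The offspring-mean matrix of this backward process, $N_k\gamma_{k,h}$, is similar to the transpose of $\hat\Xi$ and hence shares its largest eigenvalue $R_q$; this is the consistency check that a major outbreak is possible under exactly the same threshold $R_q>1$. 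This dual picture also explains why the source type is absent from \eqref{eq_size_system}: conditioned on the packet spreading out, a node is a recipient iff its backward chain of potential infectors is infinite, which in the $N\to\infty$ limit connects to the giant infected component irrespective of where dissemination was seeded, so the limiting fraction is deterministic and source-independent.

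The main obstacle is justifying the factorisation that underlies both routes, namely that the events that a potential infector $u$ of $v$ is itself infected are asymptotically independent across the in-neighbours of $v$ and independent of the meeting events at $v$. This is the locally tree-like property: as $L,N\to\infty$ with $N_k\gamma_{k,h}$ held finite the transmission graph is sparse, short cycles through $v$ vanish in probability, and the $r$-step backward neighbourhood of $v$ couples to the corresponding multi-type Galton--Watson tree, the same Poisson coupling (with the rate control of \cite{pollard-poisson-1997}) already used for Theorem~\ref{thm_extinct}. Beyond this I would need a concentration argument, that the empirical fraction of infected type-$h$ nodes converges to its mean $\hat z_h$ in a major outbreak, and I would select the nontrivial (maximal) fixed point $\hat z_h>0$ of \eqref{eq_size_system}, the trivial root $\hat z_h=0$ corresponding to extinction, which the conditioning on spreading out excludes.
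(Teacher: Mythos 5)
Your proposal is correct and follows essentially the same route as the paper: the paper omits the proof and defers to the multitype-epidemic final-size analysis of Ball--Clancy and Britton, which is precisely the backward (susceptibility-set) branching-process derivation you give, including the index swap to $\gamma_{k,h}$, the Poisson limit of $\mathrm{Bin}(N_k,\gamma_{k,h})$, and the selection of the nontrivial fixed point. Your explicit treatment of the asymptotic-independence (locally tree-like) issue and the eigenvalue consistency check $R_q$ supplies detail the paper leaves to its references, but it is the same underlying argument.
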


This theorem can be readily obtained from the analysis of epidemics\cite{ball-the-1993,britton-stochastic-2010} and the proof is omitted. 

Next we consider the case that there are more than one source node of the packet $j$.

\begin{corr}[Fraction of recipients with multiple source nodes]\label{corr_size}
Suppose that initially at time $0$, there are $\sum_{h=1}^H \beta_h=\beta$ nodes that have received the packet $j$, where $\beta_h$ is the number of type-h source nodes. Denote by $z(\beta_1,\dots,\beta_H)$ the expected fraction of nodes, out of the total $N$ nodes, which receive the packet in the steady state. In a large network with $N\rightarrow\infty$, there holds 
\begin{eqnarray}
z(\beta_1,\dots,\beta_H)=\left( \sum_{h=1}^H \frac{N_h  \hat{z}_h}{N}\right) 
\left( 1-\prod_{h=1}^H w_h^{\beta_h} \right).
\end{eqnarray}
\end{corr}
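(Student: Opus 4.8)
The plan is to prove Corollary \ref{corr_size} by the law of total expectation, splitting on the two mutually exclusive outcomes of the packet's propagation: either its branching process becomes extinct, or the packet spreads out. I would then plug in Corollary \ref{corr_extinct_multiple} for the probability of the spread-out event and Theorem \ref{thm_size} for the conditional fraction of recipients, and finally aggregate over node types. For each type $h$, I would write the expected fraction of type-$h$ recipients as $\Pr(\text{spread out})\cdot\mathbb{E}[\text{fraction}\mid\text{spread out}]+\Pr(\text{extinct})\cdot\mathbb{E}[\text{fraction}\mid\text{extinct}]$.

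First I would dispose of the extinct case: when the process seeded at the $\beta$ sources becomes extinct, the total number of recipients is an $O(1)$ quantity that does not grow with $N$, so the corresponding fraction of type-$h$ nodes tends to $0$ as $N\rightarrow\infty$ and contributes nothing in the limit. On the spread-out event, Theorem \ref{thm_size} identifies the limiting fraction of type-$h$ recipients as $\hat z_h$. By Corollary \ref{corr_extinct_multiple}, the packet becomes extinct with probability $\prod_{h=1}^H w_h^{\beta_h}$, so it spreads out with complementary probability $1-\prod_{h=1}^H w_h^{\beta_h}$. Since this spread-out probability is a single number attached to the global event that the epidemic takes off, it is common to all node types and can be factored out. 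Weighting the per-type fractions by $N_h/N$ and summing then gives
\[
z(\beta_1,\dots,\beta_H)=\sum_{h=1}^H \frac{N_h}{N}\,\hat z_h\left(1-\prod_{g=1}^H w_g^{\beta_g}\right)=\left(\sum_{h=1}^H \frac{N_h\hat z_h}{N}\right)\left(1-\prod_{h=1}^H w_h^{\beta_h}\right),
\]
which is the claimed identity.

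The hard part will be justifying that, conditioned on the packet spreading out, the limiting fraction of type-$h$ recipients is exactly $\hat z_h$ \emph{regardless of the number and the types of the initial sources}. This is the final-size property of a major outbreak: once the epidemic escapes the early stochastic (branching) regime, its final size is governed by the deterministic fixed-point system in Eq.\ \ref{eq_size_system}, whose solution does not depend on the initial condition provided the seed count $\beta$ stays finite, i.e. $o(N)$. Having several sources only raises the chance that at least one branch survives; it does not change where the surviving epidemic settles, because the competing branches draw from the same susceptible population and merge into the same deterministic final state. I would support this concentration-of-final-size claim with the same epidemic-theory arguments that underlie Theorem \ref{thm_size} \cite{ball-the-1993,britton-stochastic-2010}, and I would also verify that the joint extinction event over all $\beta$ sources takes precisely the product form $\prod_{h=1}^H w_h^{\beta_h}$, which rests on the asymptotic independence of the branching processes seeded at distinct source nodes as $N\rightarrow\infty$.
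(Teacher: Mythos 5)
Your proposal is correct and follows essentially the same argument as the paper's proof: condition on the spread-out event whose probability comes from Corollary~\ref{corr_extinct_multiple}, note that the extinct case contributes a vanishing fraction as $N\rightarrow\infty$, and use Theorem~\ref{thm_size} for the per-type fractions $\hat z_h$ on the spread-out event. The paper's proof is just a terser version of this; the concentration-of-final-size point you flag as the ``hard part'' is treated there as implicit in Theorem~\ref{thm_size} and the cited epidemic-theory references.
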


\begin{proof}
From Corollary \ref{corr_extinct_multiple}, the probability that the packet spreads out is $\left( 1-\prod_{h=1}^H w_h^{\beta_h} \right)$. Note that if the packet is not spread out, the fraction of recipients goes to 0 as $N\rightarrow\infty$. If the packet spreads out, the expected number of type-h recipients is $N_h \hat{z}_h$, where $\hat{z}_h$ is given by Theorem \ref{thm_size}. The conclusion follows.
\end{proof}

For the special case that there is only $H=1$ type of node in a network, a closed form expression can be obtained. 
\begin{corr}[Fraction of recipients for homogeneous networks]\label{corr_frac_H1}
Suppose that there is only $H=1$ type of node in a network and a packet is sent to $\beta$ different nodes in the first phase. Then in the steady state, the expected fraction of recipients of this packet is 
\begin{eqnarray}\label{eq_frac_H1}
z(\beta)&=&\left( 1+ \frac{\mathcal{W}(-N\gamma_{1,1}\exp(-N\gamma_{1,1}))}{N\gamma_{1,1}}\right) \nonumber\\
&&\times \left( 1- \left(-\frac{\mathcal{W}(-N\gamma_{1,1}\exp(-N\gamma_{1,1}))}{N\gamma_{1,1}}\right)^\beta\right).
\end{eqnarray}
\end{corr}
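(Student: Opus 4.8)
The plan is to recognize this corollary as the $H=1$ specialization of Corollary \ref{corr_size} and then substitute closed forms. Setting $H=1$, $N_1=N$, and $\beta_1=\beta$ in Corollary \ref{corr_size}, the prefactor $\sum_{h=1}^H N_h\hat z_h/N$ collapses to $\hat z_1$, and the spread-out probability $1-\prod_{h=1}^H w_h^{\beta_h}$ collapses to $1-w_1^\beta$, so that $z(\beta)=\hat z_1\left(1-w_1^\beta\right)$. The remaining task is to express both $w_1$ and $\hat z_1$ in closed form using the Lambert-$\mathcal{W}$ function and combine the two factors.

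The factor $w_1$ is supplied directly by Corollary \ref{corr_extinct_single_type}, namely $w_1=-\mathcal{W}(-N\gamma_{1,1}\exp(-N\gamma_{1,1}))/(N\gamma_{1,1})$, which immediately reproduces the second bracket of Eq.\ \ref{eq_frac_H1} after raising to the power $\beta$. So the only real work is to compute $\hat z_1$, the expected fraction of recipients given the packet spreads out. By the $H=1$ case of Theorem \ref{thm_size}, $\hat z_1$ is the nontrivial root of $1-\hat z_1=\exp(-N\gamma_{1,1}\hat z_1)$.

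To solve this fixed-point equation I would write $a\triangleq N\gamma_{1,1}$ and substitute $u=-a(1-\hat z_1)$, which recasts the equation into the canonical form $u\,e^{u}=-a\,e^{-a}$. By the defining property of the Lambert-$\mathcal{W}$ function this yields $u=\mathcal{W}(-a e^{-a})$, hence $\hat z_1=1+\mathcal{W}(-N\gamma_{1,1}\exp(-N\gamma_{1,1}))/(N\gamma_{1,1})$, which is exactly the first bracket of Eq.\ \ref{eq_frac_H1}. Multiplying the two brackets then delivers the claimed expression, so the bulk of the proof is this short algebraic inversion.

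The main subtlety, and the step I would treat most carefully, is branch selection. When the packet spreads out we are in the supercritical regime $a=N\gamma_{1,1}>1$, so $-a<-1$ and the argument $-a e^{-a}$ lies in $(-1/e,0)$, where $u\,e^{u}=-a e^{-a}$ has two real roots: the trivial one $u=-a$ (giving $\hat z_1=0$) on the $\mathcal{W}_{-1}$ branch, and the nontrivial one in $(-1,0)$ on the principal branch $\mathcal{W}_0$. I would verify that interpreting $\mathcal{W}$ as the principal branch selects the unique root with $\hat z_1\in(1-1/a,1)\subset(0,1)$, which is the physically meaningful fraction; this is the same branch convention already implicit in Corollary \ref{corr_extinct_single_type}, so the two closed forms are consistent and the final product is well defined.
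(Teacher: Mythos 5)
Your proposal is correct and follows exactly the route the paper intends (the paper states this corollary without proof, as the immediate $H=1$ specialization of Corollary \ref{corr_size} combined with Corollary \ref{corr_extinct_single_type} and the Lambert-$\mathcal{W}$ inversion of the fixed-point equation $1-\hat z_1=\exp(-N\gamma_{1,1}\hat z_1)$ from Theorem \ref{thm_size}). Your explicit treatment of the branch selection --- confirming that the principal branch $\mathcal{W}_0$ picks out the nontrivial root $\hat z_1\in(1-1/a,1)$ rather than the trivial root $\hat z_1=0$ in the supercritical regime $N\gamma_{1,1}>1$ --- is a detail the paper leaves implicit, and it is handled correctly.
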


\section{Minimising the traffic load of cellular networks}\label{section_optimization}
Based on the above characterisation of the content dissemination process, this section investigates the optimal content dissemination strategy that minimises the total traffic load of cellular networks.

Recall that in the initial phase, the BSs transmit $\beta$ packets to $\beta$ different nodes through the cellular network. 
Denote by $Y$ the expected number of packets that BSs need to transmit in the complement phase.

\begin{defn}[Cellular traffic load]
The \emph{cellular traffic load} is the expected number of packets transmitted by BSs through the cellular network, which consist of the packets transmitted in the first and the third phases, i.e. $\beta+Y$, in order to transmit $M$ messages to $N$ nodes.
\end{defn}

Note that the value of $\beta$ determines the value of $Y$, which is calculated later in this section. Then the problem of minimising the cellular traffic load can be formulated as follows:
\begin{equation}\label{eq_optimization_general}
\begin{aligned}
& \underset{\beta}{\text{Minimise}}
& & \beta+Y \\
& \text{Subject to}
& & \beta\in\{1,2,\dots,N \}.
\end{aligned}
\end{equation}

Using the erasure coding technique introduced in Section \ref{section_coding_scheme}, the BSs push $\beta$ coded packets to $\beta$ different nodes in the initial phase. The following lemma gives the optimum strategy to choose the source nodes to disseminate the packets.

\begin{lem}[Sharing maximisation strategy]\label{lem_packet_allocation2}
Label all nodes in the network in the descending order of their values of $w_h$, which is given by Theorem \ref{thm_extinct}. If more than one node have the same value of $w_h$, their order can be arbitrarily assigned. Suppose that BSs push $\beta$ encoded packets to $\beta$ different nodes in the initial phase. Then the optimal strategy that minimises the cellular traffic load is to push $\beta$ \emph{different} packets to the \emph{first} $\beta$ nodes in the above order.
\end{lem}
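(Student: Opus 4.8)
The plan is to fix $\beta$ (the outer minimisation over $\beta$ is Eq.\ \ref{eq_optimization_general}) and reduce the choice of source nodes and packet assignment to a combinatorial problem in the single quantity that the allocation affects, namely $Y$. Writing $B_v$ for the number of \emph{distinct} coded packets held by node $v$ at the deadline, a node requests $(M-B_v)^+$ packets in the complement phase, so $Y=\mathbb{E}\!\left[\sum_{v}(M-B_v)^+\right]=NM-\mathbb{E}\!\left[\sum_{v}\min(B_v,M)\right]$. Hence minimising $Y$ is equivalent to maximising the expected total number of \emph{useful} packets (distinct, capped at $M$) collected across all nodes, and it suffices to show that seeding $\beta$ distinct packets at the first $\beta$ nodes of the stated ordering attains this maximum.

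First I would establish that pushing $\beta$ \emph{distinct} coded packets dominates any allocation that duplicates a packet. If two sources carry the same coded packet, then any node reached by both gains only one unit toward its quota of $M$, whereas re-seeding one of the two sources with a fresh linearly independent packet can only raise every $B_v$. I would make this precise by a pathwise coupling of the sharing phase under the common randomness of the meeting process: replacing a duplicate by a new distinct packet never decreases any $B_v$ on any sample path, hence weakly decreases each $(M-B_v)^+$ and therefore weakly decreases $Y$.

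With $\beta$ distinct packets in hand, each packet spreads through its own epidemic, and by the asymptotic independent-spread analysis of Section \ref{section_analysis} the expected recipient count of a packet depends on its allocation only through the type of its single source node. By Corollary \ref{corr_size} with one source of type $h$, this expected fraction equals $C\,(1-w_h)$ with $C=\sum_{k=1}^{H}N_k\hat z_k/N$ a constant independent of the source type; thus the traffic reduction contributed by each packet is strictly decreasing in the extinct probability $w_h$ of its source. The global optimum then follows by an exchange argument: if an optimal allocation seeds a node $u$ not among the first $\beta$ while leaving a first-$\beta$ node $u'$ (with $w_{u'}\le w_u$) unseeded, moving that packet from $u$ to $u'$ raises its expected spread and, by the coupling above, weakly increases every $B_v$ and so does not increase $Y$; iterating eliminates all inversions, leaving exactly the first $\beta$ nodes (minimal $w_h$, equivalently maximal spreading power $1-w_h$) seeded with distinct packets.

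The main obstacle is that $Y$ is governed by $\min(B_v,M)$ rather than by the raw total $\sum_v B_v$, and that the recipient sets of different packets are statistically dependent, so maximising each packet's spread \emph{in isolation} is not \emph{a priori} the same as minimising $Y$. The way around this is to keep every step pathwise and monotone rather than additive: the couplings in the two exchange steps show only that improving a single source \emph{never decreases any} $B_v$ on every sample path, and since $(M-B_v)^+$ is nonincreasing in $B_v$, the over-reception cap at $M$ and the inter-packet correlations are respected automatically and never need to be evaluated explicitly.
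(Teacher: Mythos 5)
Your proof's load-bearing step is the claim that the couplings are \emph{pathwise monotone}: that replacing a duplicate seed by a fresh distinct packet, or moving a seed to a node with smaller $w_h$, ``never decreases any $B_v$ on any sample path.'' This claim is false in the paper's model, and this is a genuine gap. The sharing scheme is SIR: a node transmits packet $j$ only during a finite active window starting at its \emph{first} receipt, and a recovered node can neither transmit nor receive $j$. Hence changing the seed set shifts downstream infection \emph{times}, and an earlier infection can close a node's window before a crucial later meeting. Concretely, take five nodes $s_1,s_2,w,x,u,v$ (plus target $v$), all with $\tau=1$, and a meeting realization $(s_2,w)$ at $t=0.1$, $(s_1,x)$ at $0.9$, $(s_1,w)$ at $0.95$, $(w,u)$ at $1.5$, $(x,u)$ at $1.8$, $(u,v)$ at $2.7$. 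If \emph{both} $s_1,s_2$ carry the same packet $j$, then $w$ is infected at $0.1$ and recovered by $1.5$, so $u$ is first infected at $1.8$ via $x$, its window $[1.8,2.8]$ covers $2.7$, and $v$ receives $j$. If instead $s_1$ carries $j$ and $s_2$ carries a fresh packet $j'$, then for packet $j$: $w$ is infected at $0.95$, infects $u$ at $1.5$, and $u$'s window $[1.5,2.5]$ misses the meeting at $2.7$; for packet $j'$: it dies at $w$ (window $[0.1,1.1]$ misses $1.5$). So $v$ gets nothing, and $B_v$ strictly \emph{decreases} under your ``improving'' replacement. The same timing effect demolishes the pathwise step of your exchange argument even more directly: a packet seeded at $u'$ simply reaches a different set of nodes than one seeded at $u$, so no coupling makes every $B_v$ weakly increase. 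Since pathwise monotonicity was precisely how you proposed to neutralize the obstacle you correctly identified --- that $Y$ depends on $\min(B_v,M)$ and on correlations among packets' recipient sets --- that obstacle remains open in your argument.

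What is true, and what the paper actually uses, are the corresponding statements \emph{in expectation} in the asymptotic regime $N\to\infty$: by Corollaries \ref{corr_extinct_multiple} and \ref{corr_size}, seeding a single packet at two nodes yields spread-out probability $1-w_{h_1}w_{h_2} \le (1-w_{h_1})+(1-w_{h_2})$, so distinct packets dominate duplicates on average, and a single packet's expected fraction of recipients $\left(\sum_k N_k\hat z_k/N\right)\left(1-\prod_h w_h^{\beta_h}\right)$ is maximized by placing its one seed at a node of the type with smallest $w_h$; the allocation problem then reduces to minimizing $\prod_h w_h^{\beta_h}$, which is the paper's argument. To repair your proof you would need to (i) recast both the duplicate-elimination and the exchange steps as comparisons of expectations in this limit, and (ii) close the $\min(B_v,M)$ issue there rather than pathwise: in the mean-field limit receptions of distinct packets at a fixed node become independent, so $B_v\sim\mathrm{Bin}(\beta,z)$, which is stochastically increasing in $z$, and since $(M-b)^+$ is nonincreasing in $b$, $\mathbb{E}\left[(M-B_v)^+\right]$ is nonincreasing in $z$; only then does ``maximize each packet's expected spread'' imply ``minimize $Y$.''
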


\begin{proof}
First, it can be readily shown that the strategy that minimises the cellular traffic load is to push $\beta$ \emph{different} coded packets in the initial phase rather than pushing multiple copies of any coded packet. 

Because different encoded packets are shared independently of one another, we next consider a randomly-chosen packet, say packet $j$. It is obvious that to minimise the cellular traffic load, one needs to maximise the number of recipients of packet $j$. 
According to Corollary \ref{corr_size}, the expected fraction of nodes, out of the total $N$ nodes, which receive packet $j$ in the steady state is 
\begin{eqnarray}
z(\beta_1,\dots,\beta_H)=\left( \sum_{h=1}^H \frac{N_h  \hat{z}_h}{N}\right) 
\left( 1-\prod_{h=1}^H w_h^{\beta_h} \right),
\end{eqnarray}
where $\beta_1+\beta_2+\dots+\beta_H=1$ because we are now considering the propagation of single packet - packet $j$. In other words, only one value among $\beta_1,\beta_2,\dots,\beta_H$ is equal to one and the other values are all equal to 0. It is obvious that to maximise $z(\beta_1,\dots,\beta_H)$, one should assign the values of $\beta_1,\beta_2,\dots,\beta_H$ in a way that minimises $\prod_{h=1}^H w_h^{\beta_h}$, i.e. let the only non-zero value $\beta_k=1$ for the $k^{th}$ type of nodes that have the smallest value of the extinct probability $w_k$ among all $w_h$ for $h\in\{1,2,\dots,H\}$.

It can be shown that to maximise the sharing performance, the $\beta$ coded packets need to be pushed to $\beta$ \emph{different} nodes. Therefore, when the number of type-k nodes $N_k$ is less than the total number of packets $\beta$, some packets need to be pushed to the nodes that have the second (and if needed, the third, forth, etc.) smallest value of the extinct probability $w_i$ among all $w_h$ for $h\in\{1,2,\dots,H\}$.
\end{proof}


Next we focus on determining the optimum value of $\beta$ for the special case of a homogeneous network with $H=1$. The optimum value of $\beta$ for the more general case of a heterogeneous network with $H>1$ can be determined analogously albeit with greater complexity.

Denote by random variable $B$ the number of packets received by a randomly-chosen node at the end of the sharing phase. Then $B$ follows a Binomial distribution, i.e. the probability that a node receives $B=b$ packets is $\Pr(B=b)=\binom{\beta}{b} (z(1))^b (1-z(1))^{\beta-b}$, where $\binom{\beta}{b}=\frac{\beta!}{b!(\beta-b)!}$.

Then in the complement phase, the number of packets that need to be transmitted to a randomly-chosen node is $(M-B)^+$, where $(x)^+=\max\{0,x\}$.

Finally, the expected number of packets that the BSs need to transmit in the complement phases is
\begin{eqnarray}
Y&=&N\mathbb{E} [ (M-B)^+ ]\\
&=&N\sum_{b=0}^M (M-b) \binom{\beta}{b} (z(1))^b (1-z(1))^{\beta-b}.
\end{eqnarray}

Then the optimisation problem in Eq. \ref{eq_optimization_general} becomes
\begin{equation}
\begin{aligned}
& \underset{\beta}{\text{Minimise}}
& & \beta+N\sum_{b=0}^M (M-b) \binom{\beta}{b} z_1^b (1-z_1)^{\beta-b} \\
& \text{Subject to}
& & \beta\in\{1,2,\dots,N \}.
\end{aligned}
\end{equation}

This optimisation problem  can be readily solved numerically using Matlab, where the results are presented in the next section.

\section{Simulation and discussion}\label{section_simulation}
This section reports on simulations to verify the accuracy of the analysis presented in the previous sections. 
The simulations are conducted using a mobile network simulator written in C++. 
%
%
Specifically, $N=960$ nodes are uniformly deployed on a square $(0,8000]^2$. Consequently the nodes' density equals to $15$ nodes/$km^2$, which is equal to the density of cabs in New York metropolitan area \cite{taxi_density}.  After initial deployment of the nodes, they start to move according to the random direction mobility model. When the nodes hit the boundary of the square, they may appear from the other side of the boundary.  It can be shown that when nodes move according to the random direction mobility model, the inter-meeting time follows an exponential distribution. 
The node's speed is $V=10m/s$ (typical vehicle moving speed). The radio range $r_0=20$m or $250$m (typical radio ranges using Wi-Fi Tethering  or DSRC \cite{DSRC-250}). 
Every point shown in the simulation result is the average value from 500 simulations.

Consider two types of nodes moving according to the random direction model with speeds $V_1=10m/s$ and $V_2=0$, i.e. mobile and static nodes, and there are equal number of nodes in each type. Fig. \ref{pic_change_tau_half_all} shows the probability that a packet spreads out and the expected fraction of recipients of a single packet.  The analytical result of the probability that a packet spreads out when the source node is of type-1 (resp. type-2) is given by $w_1$ (resp. $w_2$) from Theorem \ref{thm_extinct}. The analytical result of the expected fraction of recipients of a packet when the source node is of type-1 (resp. type-2) is given by $z(1,0)$ (resp. $z(0,1)$) from Corollary \ref{corr_size}. It is interesting to note that the probability that a packet spreads out and the expected fraction of recipients can be significantly affected by the type of source node. More specifically, other things being equal, it can be seen that a mobile source node can spread the packet to more recipients than a static source node.

\begin{figure}[hbt]
\centering
\includegraphics[scale=0.60]{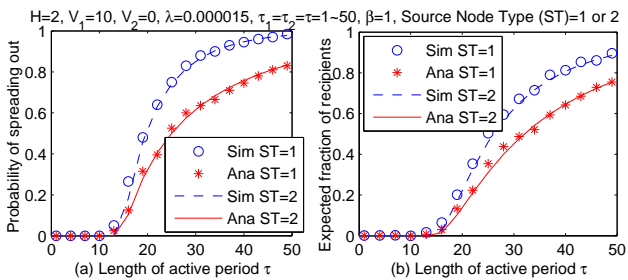}\vspace{-3px}
\caption{Simulation and analytical results of (a) The probability that a packet spreads out and (b) the expected fraction of recipients of a packet.}
\label{pic_change_tau_half_all}\vspace{-4px}
\end{figure}

Fig. \ref{pic_plot_wifi_frac} shows the results of another interesting case where the complementary network consists of a set of fixed-location base stations (e.g. WiFi APs). In the second phase, the message is disseminated from these WiFi APs to mobile users (i.e. type-1 nodes). 
Specifically, we consider that a small number of WiFi APs are randomly and uniformly deployed in a given area, which is a widely-used setting for AP deployment \cite{andrews-seven-2013}. 
Because the WiFi APs are usually connected to the Internet via wired connections, we set $\gamma_{2,2}=1$. Other parameters are the same as those in the previous sub-section. 
A message is transmitted to all the WiFi APs at time 0. Then the WiFi APs keep transmitting the message for a given time period $\tau_2$. In Fig. \ref{pic_plot_wifi_frac}, we let the active period of type-2 nodes be $\tau_2=500, 1500$ while varying the active period of type-1 nodes. 
Note that $\tau_1=0$ corresponds to the traditional case \cite{lee-mobile-2013,bulut-wifi-2013} where nodes do not cooperatively share received packets and they solely rely on WiFi APs to offload data traffic from cellular networks. It can be seen in Fig. \ref{pic_plot_wifi_frac} that a longer active period of mobile nodes $\tau_1$ leads to a larger expected fraction of recipients. It is obvious that packet sharing using ad hoc connections between mobile nodes can significantly increase the number of recipients of a packet, hence reducing the number of transmissions required by BSs.

\begin{figure}[hbt]
\centering
\includegraphics[scale=0.55]{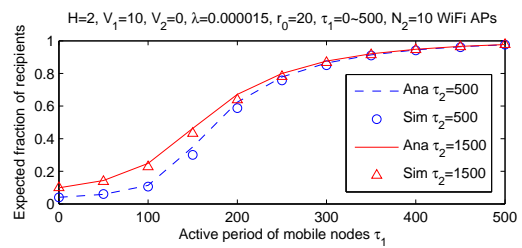}\vspace{-3px}
\caption{The expected fraction of recipients of a packet in a network with 10 fixed WiFi APs and some mobile nodes.}
\label{pic_plot_wifi_frac}\vspace{-4px}
\end{figure}

The following results further  evaluate the cellular traffic loads.

Fig. \ref{pic_single_packet_optimization} shows the expected cellular data traffic load $\beta+Y$ with different values of $\beta$, viz. the number of packets sent in the first phase, using different content dissemination strategies. To study the impact of coding on the cellular traffic load, we consider two networks with the same setting except that one network employs the erasure coding technique (c.f. Section \ref{section_strategy}) but the other network does not. The performance of the network without employing network coding can be readily obtained using the same technique adopted for analysing network employing the erasure coding technique. 

Several interesting trends can be observed in Fig. \ref{pic_single_packet_optimization}. 
Firstly, when $\beta$ is small, two networks have a similar and relatively high cellular traffic load. This is because only a limited number of nodes receive the packets through the complementary network, hence most packets are directly transmitted to the users via cellular networks. 
As $\beta$ increases, the cellular traffic load first decreases rapidly,  due to a rapid increase in the expected fraction of recipients in the sharing phase. Then after a certain point, the cellular traffic load starts to gradually increases as $\beta$ further increases. This is because the expected fraction of recipients has limited increase when $\beta$ increases further; on the other hand the increase in $\beta$ causes more cellular data traffic. It is interesting to note that sending out more packets in the initial phase is not always beneficial.

Furthermore, it is interesting to note that above a certain value of $\beta$, e.g. $\beta=17$ in Fig. \ref{pic_single_packet_optimization}(a), the traffic load of cellular networks employing coding is significantly smaller than that of networks without coding, due to the following reason. Recall that when $\beta>M$, each node only needs $M$ different coded packets to decode all $M$ messages when coding is employed. On the other hand, in a network without coding, there is a non-zero probability that two packets received at a node contain the same message. Therefore a node may need more than $M$ packets. 
%

Note that without employing the cooperative content dissemination strategy, the nodes need to request all packets from the cellular network and the traffic load of cellular networks is $960$. Compared with the values of $\beta+Y$ in Fig. \ref{pic_single_packet_optimization}, it is evident that the cooperative content dissemination strategy can significantly reduce the traffic load of cellular networks.

\begin{figure}[hbt]
\centering
\includegraphics[scale=0.55]{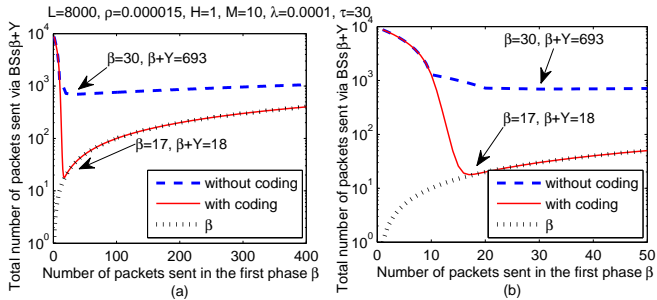}\vspace{-3px}
\caption{Comparison between the expected cellular data traffic loads $\beta+Y$ in networks with and without coding. Note that subplot (b) shows the range $\beta\in[1,50]$ of subplot (a).}
\label{pic_single_packet_optimization}\vspace{-4px}
\end{figure}

\section{Conclusion and future work}\label{section_conclusion}
This paper investigated a cooperative content dissemination strategy for heterogeneous networks consisting of different types of devices. The content dissemination strategy can effectively offload a significant amount of data traffic from cellular networks to complementary networks such as WiFi and device-to-device networks using ad hoc connections that emerge when devices move and meet one another. Theoretical analysis for the content dissemination process was presented. On that basis, the optimal design of the content dissemination strategy was discussed, which maximally reduces traffic load of cellular networks while guaranteeing the successful delivery of all content.

In our paper, we consider that the duration of the sharing phase $\mathcal{T}_{end}$ is sufficiently long such that the epidemic sharing process is able to reach its steady state. In the future, one may consider the case where only a short time period is allowed for the sharing phase, causing the epidemic sharing process to terminate before it reaches the steady state. In this case, non-trivial analysis is required to calculate the number of recipients of a packet at an arbitrary time instant. Furthermore, it is an interesting extension of our work  to consider different probability distributions for the inter-meeting time of nodes, which can be affected by the nodes' mobility and network area as described in Section \ref{section_system_model}.

\bibliographystyle{IEEEtran}

\end{document}